\newtheorem{theorem}{Theorem}
\newtheorem{lemma}[theorem]{Lemma}
\theoremstyle{definition}
\theoremstyle{remark}
\newtheorem{remark}{Remark}
\DeclareMathOperator{\sinc}{sinc}
\DeclareMathOperator{\vol}{vol}
\providecommand{\norm}[1]{\left\lVert#1\right\rVert}
\newcommand{\nn}{\nonumber}
\newcommand{\vect}[1]{\underaccent{\bar}{#1}}
\newcommand{\const}[1]{{\mathcal{#1}}}
\newcommand{\Reals}{\mathbb{R}}
\newcommand{\Complex}{\mathbb{C}}
\newcommand{\E}{\mathsf{E}}
\newcommand{\ie}{\emph{i.e.}}
\newcommand{\eg}{\emph{e.g.}}
\newcommand{\der}{\mathrm{d}}
\newcommand{\cc}{\mathrm{c.c.}}
\newcommand{\eqdef}{\stackrel{\Delta}{=}}
\newcommand{\snr}{\text{SNR}}
\newcommand{\bw}{\mathcal{B}}
\begin{document}

\thispagestyle{empty}

\title{Upper Bound on the Capacity of the Nonlinear
  Schr\"odinger Channel}

\author{\IEEEauthorblockN{Mansoor~I.~Yousefi\IEEEauthorrefmark{1}\IEEEauthorrefmark{2}, 
Gerhard~Kramer\authorrefmark{1}\IEEEauthorrefmark{2} and
  Frank~R.~Kschischang\IEEEauthorrefmark{2}\IEEEauthorrefmark{3}}
\IEEEauthorblockA{\IEEEauthorrefmark{1}
Institute for Communications Engineering, Technical University of Munich, Germany\\
}
\IEEEauthorblockA{\IEEEauthorrefmark{2}
Institute for Advanced Study, Technical University of Munich, Germany\\
}
\IEEEauthorblockA{\IEEEauthorrefmark{3}
Edward S. Rogers Sr. Dept. of Electrical \&
Computer Engineering, University of Toronto, Canada
}
}

\date{}

\maketitle

\IEEEpeerreviewmaketitle
\begin{abstract}
It is shown that the capacity of the channel modeled by
(a discretized version of) the stochastic
nonlinear Schr\"odinger (NLS) equation is upper-bounded by $\log(1+\snr)$
with $\snr=\const P_0/\sigma^2(z)$, where $\const P_0$ is the average input
signal power and $\sigma^2(z)$ is the total noise power up to distance $z$.
The result is a consequence of the fact that the deterministic NLS equation is
a Hamiltonian energy-preserving dynamical system.
\end{abstract}

\section{Introduction}

Half a century after the introduction of the optical fiber, the problem of
determining its capacity remains open.  This holds even for the single-user
point-to-point channel subject to a power and bandwidth constraint. There
is also a lack of general upper bounds, as well as lower bounds in the
high-power regime. The asymptotic capacity when
power $\const P\rightarrow\infty$ is also unknown.

Numerical simulations of the optical fiber channel with additive white
Gaussian noise (AWGN) seem to indicate that the data rates that can be
achieved using current methods are below $\log(1+\snr)$, the capacity of an
AWGN channel with signal-to-noise ratio $\snr$. In this paper, we prove
this conjecture, namely, we show that
\begin{equation}
C\leq \log(1+\snr),
\label{eq:upper-bound}  
\end{equation}
where $\snr(z)\eqdef\const P_0/\sigma^2(z)$, in which $\const P_0$ is the
average input signal power and $\sigma^2(z)$ is the total noise power up to
the distance $z$. Here $C$ is the capacity of the point-to-point channel
per complex degree-of-freedom. 

Motivated by recent developments suggesting that the nonlinearity can
be constructively taken into account in the design of communication
schemes to potentially address the capacity bottleneck problem in optical
fiber \cite{yousefi2012nft1,yousefi2012nft2,yousefi2012nft3}, it has been
speculated that data rates above $\log(1+\snr)$ may even be achievable.
While the nonlinearity can be exploited, as for instance in
\cite{yousefi2012nft1,yousefi2012nft2,yousefi2012nft3,wahls2014fnft,zhang2014sps,
prilepsky2013nsm}, the upper bound \eqref{eq:upper-bound} shows that it
does not offer any gain in capacity relative to the linear channel.  All
one can hope for is to embrace nonlinearity in the communication design so
that it does not penalize the capacity at high powers. This is
expected in the (closed) conservative system \eqref{eq:nls}, which does not
include any gain (amplification) mechanism.

Throughout this paper, lower and upper case letters represent,
respectively, deterministic and random variables. Row vectors are denoted
by underline, \eg, $\vect Q^n\eqdef(Q_1,\cdots, Q_n)$. As usual,
$\Reals$, resp.~$\Complex$, denotes the set of real, resp.~complex, numbers.
The imaginary unit is denoted by $j=\sqrt{-1}$. 

\section{Continuous-time Channel Model and Its Discretization}

Let $Q(t,z): \Reals\times \Reals^+\mapsto \Complex$ be a function of time
$t$ and space $z$. Signal propagation in optical fiber is described by the
stochastic nonlinear Schr\"odinger (NLS) equation \cite[Eq.~3]{yousefi2012nft1} 
\begin{IEEEeqnarray}{rCl}
j\partial_z Q=\partial_{tt} Q+2|Q|^2Q+W(t,z).
\label{eq:nls}
\end{IEEEeqnarray}
Here $W(t,z)$ is space-time white circularly symmetric complex Gaussian noise  with constant power
spectral density $\sigma_0^2$ and bandlimited to $[-\bw/2, \bw/2]$, \ie,
\begin{IEEEeqnarray*}{rCl}
  \E \left(W(t,z)W^*(t',z')\right)=\sigma_0^2\delta_{\bw}(t-t')\delta(z-z'),
\end{IEEEeqnarray*}
where $\delta_{\bw}(x)\eqdef\bw\sinc(\bw x)$, $\sinc(x)\eqdef\sin(\pi x)/(\pi x)$, and $\delta(x)$ is the Dirac delta function.
The transmitted signal power is limited so that
\begin{IEEEeqnarray}{rCl}
\lim\limits_{\mathcal T\rightarrow\infty}\E\frac{1}{\mathcal T}\int\limits_{-\mathcal T/2}^{\mathcal T/2}|Q(t,0)|^2\der t\leq \const{P}_0.
\label{eq:power-cont}
\end{IEEEeqnarray}

We discretize the continuous-time model \eqref{eq:nls} by considering the
partial differential equation (PDE) \eqref{eq:nls} with periodic boundary
conditions
\[
Q(t+T,z)=Q(t,z),\quad \forall\: t,z,
\]
where $T$ is the signal period.  Substituting the two-dimensional Fourier
series (see \cite[Sections III and V]{yousefi2014psd})
\begin{IEEEeqnarray*}{rCl}
    Q(t,z)= \sum\limits_{k=-\infty}^{\infty}Q_k(z)e^{j(k\omega_0
      t+k^2\omega_0^2 z)},
\end{IEEEeqnarray*}
into the NLS equation \eqref{eq:nls}, we obtain 
\begin{IEEEeqnarray}{rCl}
  \partial_z Q_k(z)&=&-2j\sum\limits_{lmn}e^{j\Omega_{lmnk}
    z}Q_l(z)Q_m(z)Q_n^*(z)\delta_{lmnk}\nn\\
&&+\:W_k(z),
\label{eq:nlsf}
\end{IEEEeqnarray}
where $\delta_{lmnk}\eqdef\delta[l+m-n-k]$, $\delta[k]$ is the
Kronecker delta function, and
\begin{IEEEeqnarray*}{rCl}
  \Omega_{lmnk}\eqdef\omega_0^2(l^2+m^2-n^2-k^2),\quad \omega_0\eqdef 2\pi/T.
\end{IEEEeqnarray*}
We assume that $T\rightarrow\infty$ so that the discrete model \eqref{eq:nlsf} captures the infinitely many 
signal degrees-of-freedom in the
continuous model \eqref{eq:nls} in a one-to-one manner. As a result, $W_k$ are uncorrelated circularly 
symmetric complex Gaussian random variables, with 
\[
  \E
  \left(W_k(z)W_{k'}^*(z')\right)=f_0\sigma^2_0\delta[k-k']\delta(z-z'),\quad
  f_0\eqdef\omega_0/2\pi.
\]

The coupled stochastic ordinary differential equation  (ODE) system
\eqref{eq:nlsf} defines a discrete vector communication channel in the
frequency domain $\vect{Q}^n(0)\mapsto \vect{Q}^m(z)$. For notational
convenience, we limit to positive frequencies so that vector indices start
from one. We denote the action of the stochastic ODE system \eqref{eq:nlsf} on input $\vect{Q}^n(0)$ by $S_z$, \ie,
$\vect{Q}^m(z)=S_z(\vect{Q}^n(0))$. We denote the action of the deterministic
(noiseless)
system (where $\vect W^n=0$) on input $\vect{Q}^n(0)$
by $T_z$, \ie,
$\vect{Q}^m(z)=T_z(\vect{Q}^n(0))$.
The power constraint \eqref{eq:power-cont} is discretized to $\const P(0)\leq \const P_0$, where 
\begin{equation}
\const P(z)\eqdef\sum\limits_{k=1}^n \E|Q_k(z)|^2. 
\label{eq:power}
\end{equation}
In this paper, we assume $n=m$ and study the
capacity of the discretized channel $S_z$, instead of the original
continuous-time channel \eqref{eq:nls}. See Remark~\ref{rem:spectral-broadening} for the case $n\neq m$.

The upper bound \eqref{eq:upper-bound} on the capacity of $S_z$ is
obtained as follows. The transformation $T_z$ is energy-preserving,
implying that the output power in $S_z$ is $\const{P}(0)+\sigma^2(z)$,
$\sigma^2(z)\eqdef\bw \sigma_0^2 z$. Consequently, the output (differential)
entropy rate is upper-bounded, from the maximum entropy theorem, by
$C_n+\log(\const{P}_0+\sigma^2(z))$, $C_n\eqdef\log(\pi e/n)$. For the
conditional entropy, note that noise is added continuously along the link.
The entropy power inequality (EPI) implies that the conditional entropy
rate is not less than the (overall) noise entropy rate $C_n+\log(\sigma^2(z))$. Combining these two
results, $C\leq \log(1+\snr)$. In what follows, we establish these two
steps.

The use of the EPI in bounding the conditional entropy rate is an important step
in our proof. It is therefore worth elaborating on the EPI briefly, to see
why entropy should increase at least by a constant amount at each
point that noise
is added along the link. In Appendix~\ref{app:epi}, we briefly review this
interesting inequality.

\section{Upper Bound}

\subsection{Upper Bound on the Output Entropy}

\begin{lemma}[Monotonicity of the Power in $S_z$] Let $\bw$ be the common
signal and noise (passband) bandwidth from input to output. The output
average power in $S_z$ is 
\begin{equation}
\const{P}(z)=\const{P}(0)+\sigma^2(z).  
\label{eq:pz}
\end{equation}
\label{lemm:power}
\end{lemma}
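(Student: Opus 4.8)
The plan is to track the total (random) energy $E(z)\eqdef\sum_{k=1}^n|Q_k(z)|^2$, whose expectation is exactly $\const P(z)$, and to show that its $z$-derivative splits into a nonlinear drift term that vanishes by the energy conservation of the deterministic flow $T_z$ and a noise term that accumulates linearly in $z$. First I would apply the complex It\^o rule to $|Q_k|^2=Q_kQ_k^*$ along a trajectory of \eqref{eq:nlsf}. Writing the stochastic ODE as $\der Q_k=b_k\,\der z+\der W_k$ with nonlinear drift $b_k\eqdef-2j\sum_{lmn}e^{j\Omega_{lmnk}z}Q_lQ_mQ_n^*\delta_{lmnk}$, this gives
\begin{IEEEeqnarray*}{rCl}
\der|Q_k|^2 &=& \bigl(Q_k^*b_k+Q_kb_k^*\bigr)\der z+\bigl(Q_k^*\der W_k+Q_k\der W_k^*\bigr)\\
&&+\:f_0\sigma_0^2\,\der z,
\end{IEEEeqnarray*}
where the last term is the It\^o correction arising from the quadratic variation $\E|\der W_k|^2=f_0\sigma_0^2\,\der z$ of the circularly symmetric noise.

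Next I would take expectations and sum over $k$. Since $Q_k$ is adapted, the stochastic integrals have zero mean, so the martingale terms drop out and
\begin{IEEEeqnarray*}{rCl}
\frac{\der}{\der z}\const P(z)=\sum_{k=1}^n 2\Re\,\E\!\left(Q_k^*b_k\right)+\sum_{k=1}^n f_0\sigma_0^2.
\end{IEEEeqnarray*}
The crux is the first sum, which is the rate of change of the total energy under the deterministic flow $T_z$. I would show it vanishes pointwise, \ie\ $\sum_k(Q_k^*b_k+Q_kb_k^*)=0$, by relabelling the quadruple summation index in $\sum_k Q_kb_k^*$ via the swap $(l,m)\leftrightarrow(n,k)$: this permutation preserves the resonance constraint $\delta_{lmnk}$, conjugates the phase $e^{-j\Omega_{lmnk}z}\mapsto e^{+j\Omega_{lmnk}z}$, and maps the monomial $Q_l^*Q_m^*Q_nQ_k$ onto $Q_lQ_mQ_n^*Q_k^*$, thereby identifying $\sum_k Q_kb_k^*$ with $-\sum_k Q_k^*b_k$. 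This is the frequency-domain expression of the conservation of the $L^2$ norm $\int|Q|^2\,\der t$ by the NLS equation, \ie\ the energy-preserving (Hamiltonian) property invoked in the abstract, and I expect it to be the main technical obstacle.

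With the drift term eliminated, only the noise contribution survives, giving $\frac{\der}{\der z}\const P(z)=nf_0\sigma_0^2$. I would then identify $nf_0=\bw$: since the mode spacing is $f_0=\omega_0/2\pi=1/T$ and the $n$ retained positive-frequency modes span the bandwidth $\bw$, one has $n=\bw T$ and hence $nf_0=\bw$. Integrating from $0$ to $z$ with initial condition $\const P(0)$ yields $\const P(z)=\const P(0)+\bw\sigma_0^2z=\const P(0)+\sigma^2(z)$, which is \eqref{eq:pz}. As an alternative to the It\^o computation, I would note that the same identity follows from a splitting (Trotter) picture: injecting noise of power $\bw\sigma_0^2\,\der z$ on each infinitesimal slice and letting the energy-preserving map $T_{\der z}$ act in between increments the expected power by exactly the injected amount, independently of the (nonlinear) signal, so the signal and noise powers simply add.
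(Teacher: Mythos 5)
Your proposal is correct and follows essentially the same route as the paper: differentiate $\const P(z)$, kill the nonlinear drift by the index symmetry $(l,m)\leftrightarrow(n,k)$ (the paper phrases this as $\Omega_{lmnk}=-\Omega_{nklm}$ making the quadruple sum real, so its imaginary part vanishes), and let the noise contribute $nf_0\sigma_0^2=\bw\sigma_0^2$ per unit length before integrating in $z$. The only difference is cosmetic: you obtain the noise contribution as an It\^o correction term from the quadratic variation, whereas the paper evaluates the signal--noise cross term directly by splitting $Q_k(z)$ into a part independent of $W_k(z)$ plus the Brownian motion $B_k(z)$ and computing $\E|B_k(z)|^2=f_0\sigma_0^2 z$; your bookkeeping is arguably the cleaner formalization of the same fact.
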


\begin{proof}
Since the signal and noise are commonly bandlimited to $\bw$, $Q_k$ and
$W_k$ are supported in $1\leq k\leq n$  for all $z$, $n=\bw/f_0$. Taking
the derivative with respect to $z$ in \eqref{eq:power} and using \eqref{eq:nlsf}, we obtain
\begin{IEEEeqnarray}{rCl}
\frac{\der \const P(z)}{\der z}
&=&4\Im\Bigl(\sum\limits_{lmnk}\E Q_lQ_mQ_n^*Q_k^*e^{j\Omega_{lmnk}z}\Bigr)
   \nn\\
&& +\sum\limits_{k=1}^n  \E(Q_k^*W_k+Q_kW_k^*)\nn\\
&=& \sum\limits_{k=1}^n \E\Bigl(Q_k^*(z)W_k(z)+Q_k(z)W_k^*(z)\Bigr),
\label{eq:dE-dz}  
\end{IEEEeqnarray}
where we used the fact that the nonlinear term is real-valued,
since $\Omega_{lmnk}=-\Omega_{nklm}$. We now integrate \eqref{eq:dE-dz} in
distance. From \eqref{eq:nlsf}, $Q_k(z)$ contains a term depending on
$W_k(l)$, $l<z$, and a Brownian motion term $B_k(z)=\int_0^z W_k(l)\der l$.
The first term is independent of $W_k(z)$; from the second term we get
\begin{IEEEeqnarray*}{rCl}
\E\bigl(\int\limits_0^z \left(Q_k^*(l)W_k(l)+\cc\right)\der l\bigr)&=&\E\bigl(\int\limits_0^z \left(  B^*_k(l)\der  B_k(l)+\cc\right)\bigr)\\
&=&\E|B_k(z)|^2 \\
&=&f_0 \sigma_0^2z,
\end{IEEEeqnarray*}
where $\cc$ stands for complex conjugate. Summing over $1\leq k\leq n$, we
obtain \eqref{eq:pz}.
\end{proof}

Using Lemma~\ref{lemm:power}, the output entropy rate can be upper bounded as
follows:
\begin{IEEEeqnarray}{rCl}
\frac{1}{n}h(\vect Q^n(z))& \overset{(a)}{\leq} & \frac{1}{n}\log \left((\pi e)^n\det K(z)\right)\nn\\
&=&\log\pi e+\frac{1}{n}\log\left(\det K(z)\right)\nn\\
&\overset{(b)}{\leq}& \log\pi e+\frac{1}{n}\sum\limits_{k=1}^n\log\left(K_{kk}(z)\right)\nn\\
&\overset{(c)}{\leq}& \log\pi e+\frac{1}{n}\sum\limits_{k=1}^n\log\left(\E|Q_k(z)|^2\right)\nn\\
&\overset{(d)}{\leq}& \log\pi e+\log\Bigl(\frac{1}{n}\E\sum\limits_{k=1}^n
  |Q_k(z)|^2\Bigr)\nn\\
&=& \log\pi e+\log(\const P(z)/n)\nn\\
&\overset{(e)}{\leq}& C_n+\log\left(\const P_0+\sigma^2(z)\right),
\label{eq:output-entropy}
\end{IEEEeqnarray}
where $K(z)>0$ is the covariance matrix of $\vect Q^n(z)$ with entries $K_{kl}(z)$.  Step $(a)$ is due
to the maximum entropy theorem. Step $(b)$ follows from Hadamard's
inequality. For step $(c)$, note that in \eqref{eq:power-cont}, power was defined as average energy in 
time interval $\mathcal T$ divided by $\mathcal T$. As a result, a non-zero
 constant
signal has non-zero power. In the covariance matrix, in contrast to \eqref{eq:power-cont} and \eqref{eq:power}, the 
mean of the random variable is 
subtracted as $K_{kk}(z)=\E|Q_k(z)|^2-\left|\E(Q_k(z))\right|^2$. Unlike \eqref{eq:power}, the mean term 
$\sum_{k=1}^n\left|\E Q_k(z)\right|^2$ is not preserved 
in the noise-free channel. Furthermore, a zero-mean signal at the input may 
not have zero mean at $z>0$. Nevertheless, step $(c)$ holds since $K_{kk}(z)\leq \E|Q_k(z)|^2$.
Steps (d) and (e) follow, respectively, from the concavity of the $\log$ function and \eqref{eq:pz}. In steps $(b)$, $(c)$ 
and $(e)$, we also used the fact that $\log$ is an increasing function.

\subsection{Lower Bound on the Conditional Entropy}

\begin{lemma}[Volume Preservation in $T_z$]
Let $\Omega=(\ell^2,\mathcal E, \mu)$ be a measure space, where
$\ell^2\eqdef \left\{ \vect q^n \:|\: \sum|q_k|^2<\infty\right\}$  and
\[
\mu(A)=\vol(A)=\int\limits_{A}\left(\prod\limits_{k=1}^{n}\der
  q_k\der q^*_k\right),\quad \forall A\in\mathcal E,
\]
is the Lebesgue measure. Transformation $T_z$, as a dynamical system on
$\Omega$, is measure-preserving. That is to say
\[
 \mu(T_z^{-1}(A))= \mu(A),\quad \forall A\in\mathcal E.
\]
\label{lemm:volume}
\end{lemma}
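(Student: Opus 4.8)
The plan is to prove volume preservation through Liouville's theorem: the deterministic flow $T_z$ is generated by the (explicitly $z$-dependent) vector field $F_k(\vect q^n,z)\eqdef-2j\sum_{lmn}e^{j\Omega_{lmnk}z}q_lq_mq_n^*\delta_{lmnk}$ on the right-hand side of the noise-free version of \eqref{eq:nlsf}, and such a flow preserves Lebesgue measure exactly when its vector field is divergence-free. Concretely, if $J(z)\eqdef\det(\partial T_z/\partial\vect q^n(0))$ is the Jacobian of the map, the Abel--Liouville--Jacobi identity gives $\der J/\der z=(\nabla\!\cdot\! F)\,J$, so that $J(0)=1$ and $\nabla\!\cdot\! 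F\equiv0$ force $J(z)\equiv1$; equivalently $\frac{\der}{\der z}\mu(T_z(A))=\int_{T_z(A)}(\nabla\!\cdot\! F)\,\der\mu=0$. The whole lemma thus reduces to the single identity $\nabla\!\cdot\! F=0$. Before computing it I would make the reduction to finite dimensions explicit: by the bandlimiting already used in the proof of Lemma~\ref{lemm:power}, only the modes $1\le k\le n$ are excited and $T_z$ maps $\Complex^n\cong\Reals^{2n}$ into itself, so the measure $\prod_{k=1}^n\der q_k\der q_k^*$ is, up to the constant $2^n$, the ordinary Lebesgue measure on $\Reals^{2n}$, on which ``volume'' is unambiguous.

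Next I would rewrite the divergence in Wirtinger coordinates. With $q_k=x_k+jy_k$ and $F_k=u_k+jv_k$, the real divergence $\sum_k(\partial u_k/\partial x_k+\partial v_k/\partial y_k)$ equals $\sum_k\left(\partial F_k/\partial q_k+\partial F_k^*/\partial q_k^*\right)=2\,\Re\sum_k\partial F_k/\partial q_k$, so it suffices to show that $\sum_k\partial F_k/\partial q_k$ is purely imaginary.

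The computational heart is this derivative. Differentiating $F_k$ with respect to $q_k$ (holding $q_k^*$ independent, so that the factor $q_n^*$ never contributes) leaves only the terms with $l=k$ or $m=k$; in either case the constraint $\delta_{lmnk}=\delta[l+m-n-k]$ collapses to $\delta[m-n]$ (resp.\ $\delta[l-n]$), and then $\Omega_{lmnk}=\omega_0^2(l^2+m^2-n^2-k^2)$ vanishes so that every surviving phase $e^{j\Omega_{lmnk}z}$ equals $1$. Carrying this out gives $\partial F_k/\partial q_k=-4j\sum_m|q_m|^2$, which is purely imaginary; hence $\Re\sum_k\partial F_k/\partial q_k=0$, the divergence vanishes, and $T_z$ is measure-preserving. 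This is no accident: the flow is Hamiltonian, $\partial_z q_k=-j\,\partial H/\partial q_k^*$ with the real interaction-picture Hamiltonian $H(z)=\sum_{lmnk}e^{j\Omega_{lmnk}z}q_lq_mq_n^*q_k^*\delta_{lmnk}$, whose reality is precisely the symmetry $\Omega_{lmnk}=-\Omega_{nklm}$ invoked in \eqref{eq:dE-dz}; consequently $\partial F_k/\partial q_k=-j\,\partial^2H/\partial q_k\partial q_k^*$ is automatically $-j$ times a real number, and volume preservation is just the finite-dimensional incarnation of Liouville's theorem for Hamiltonian systems.

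The main obstacle I anticipate is not the algebra but the rigorous setting on $\ell^2$: Lebesgue measure and volume are not defined in infinite dimensions, so $T_z$ is only formally a measure-preserving flow there. I would resolve this exactly as above, by enforcing the bandwidth constraint so that the dynamics genuinely lives on the finite-dimensional $\Complex^n$ and $T_z$ preserves this subspace, whereupon the Jacobian, the divergence, and Liouville's theorem all apply in the ordinary finite-dimensional sense. A secondary point worth a line of care is that the phases $e^{j\Omega_{lmnk}z}$ make $F$ non-autonomous; but the Abel--Liouville--Jacobi identity and hence the divergence criterion hold verbatim for time-dependent vector fields, so this costs no extra work.
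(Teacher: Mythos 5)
Your proof is correct, and it rests on the same underlying principle as the paper's (Liouville's theorem), but the execution is genuinely different and in fact more explicit. The paper exhibits a Hamiltonian $H$ in the complex canonical coordinates $(x_k,y_k)=(q_k,q_k^*)$, writes the noise-free system in the form \eqref{eq:hamilton}, and then sketches volume preservation via a one-line differential-forms computation $\der\dot\mu=0$; the reality of $H$ and the finite-dimensional setting are left implicit. You instead reduce the lemma to the single identity $\nabla\cdot F=0$ via the Abel--Liouville--Jacobi formula for the Jacobian of a (non-autonomous) flow, pass to Wirtinger coordinates so that the divergence becomes $2\,\Re\sum_k\partial F_k/\partial q_k$, and verify by direct differentiation of the right-hand side of \eqref{eq:nlsf} that $\partial F_k/\partial q_k=-4j\sum_m|q_m|^2$ is purely imaginary (the Kronecker constraint forces the surviving phases $\Omega_{lmnk}$ to vanish, exactly the antisymmetry $\Omega_{lmnk}=-\Omega_{nklm}$ already used in Lemma~\ref{lemm:power}). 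Your version buys three things the paper glosses over: an honest treatment of the infinite-dimensional issue (Lebesgue measure does not exist on $\ell^2$, so one must restrict to the bandlimited $\Complex^n$, as the paper implicitly does in Lemma~\ref{lemm:power}); an explicit, checkable computation in place of ``Liouville's theorem asserts\dots''; and a correct handling of the explicit $z$-dependence of the vector field. Your closing observation that $F_k=-j\,\partial H/\partial q_k^*$ for a real $H$, so that $\partial F_k/\partial q_k$ is automatically $-j$ times a real number, recovers the paper's Hamiltonian viewpoint as a corollary rather than a premise. The only cosmetic mismatch is that you show $\mu(T_z(A))=\mu(A)$ while the lemma is stated for $T_z^{-1}$; these are equivalent since the flow is invertible (run the ODE backward), and a one-line remark to that effect would close the loop.
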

\begin{proof}
We note that, when $W_k=0$, the ODE system \eqref{eq:nlsf} is Hamiltonian,
\ie, it permits an alternative formulation
\begin{equation}
\dot x_k=\frac{\partial H}{\partial y_k},\quad \dot y_k=-\frac{\partial H}{\partial x_k},\quad k=1,\cdots,n,
\label{eq:hamilton}
\end{equation}
where dot represents $\partial_z$ , $(x_k,y_k)=(q_k,q^*_k)$ and the
Hamiltonian function $H$ is given by
\begin{IEEEeqnarray*}{rCl}
  H(\vect x^n,\vect y^n)&=&j
\sum\limits_{k=1}^{n}\omega_0^2k^2 x_k y_k\\
&&-\:j\sum\limits_{abcd=1}^{n}x_ax_by_cy_de^{j\Omega_{abcd}z}\delta_{abcd}
.
\end{IEEEeqnarray*} 
Liouville's theorem asserts that Hamiltonian systems preserve the
Lebesgue measure \cite{viarnoldbook}. This is indeed easy to see. Let
$\der\mu=\prod_{k=1}^n \der x_k \der y_k$. Then
\begin{IEEEeqnarray*}{rCl}
  \der\dot\mu&=&\left(\der
  \dot x_1 \der y_1+ \der x_1\der  \dot y_1 \right)\prod\limits_{k=2}^n \der x_k \der y_k
+\cdots\\
&=&0,
\end{IEEEeqnarray*}
where  we substituted \eqref{eq:hamilton}. It follows that $T_z$ is a
volume-preserving transformation (in the sense of ergodic theory
\cite{walters2000ergodic}).
\end{proof}

\begin{lemma}[Entropy Preservation in $T_z$]
The flow of $T_z$ is entropy-preserving, \ie,
$h(T_z^{-1}(\vect Q^n))=h(\vect Q^n)$.
\label{lemm:entropy-preservation}
\end{lemma}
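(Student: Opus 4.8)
The plan is to use the standard change-of-variables formula for differential entropy together with the volume-preservation result just established in Lemma~\ref{lemm:volume}. Recall that if $\vect Y^n = T(\vect X^n)$ for an invertible, differentiable map $T$ on $\mathbb{R}^{2n}$ (identifying each complex $q_k$ with the pair $(x_k,y_k)$), then the densities transform as $p_{\vect Y}(\vect y) = p_{\vect X}(T^{-1}(\vect y)) \, |\det J_{T^{-1}}(\vect y)|$, and consequently the differential entropies satisfy
\begin{IEEEeqnarray*}{rCl}
h(\vect Y^n) = h(\vect X^n) + \E\log\left|\det J_T(\vect X^n)\right|,
\end{IEEEeqnarray*}
where $J_T$ is the Jacobian of $T$. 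This is the one general fact I would invoke; everything else is bookkeeping.

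First I would set $\vect X^n = \vect Q^n$ and $\vect Y^n = T_z(\vect Q^n)$, so that $T = T_z$ is the deterministic (noise-free) flow of \eqref{eq:nlsf}. The key observation is that Lemma~\ref{lemm:volume} says precisely that $T_z$ is measure-preserving with respect to the Lebesgue measure $\prod_k \der q_k\,\der q_k^*$, which is equivalent to the statement that $|\det J_{T_z}| = 1$ pointwise (a volume-preserving diffeomorphism has unit Jacobian determinant in absolute value almost everywhere). Substituting this into the change-of-variables formula collapses the correction term $\E\log|\det J_{T_z}| = \E\log 1 = 0$, yielding $h(T_z(\vect Q^n)) = h(\vect Q^n)$. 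Since $T_z$ is invertible (the deterministic NLS flow is reversible, so $T_z^{-1} = T_{-z}$ is itself a volume-preserving flow), the same argument applied to $T_z^{-1}$ gives $h(T_z^{-1}(\vect Q^n)) = h(\vect Q^n)$, which is the claimed identity.

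The main obstacle I anticipate is not the algebra but the rigorous justification of two implicit hypotheses in the change-of-variables formula. The first is that $T_z$ is a genuine $C^1$ diffeomorphism on the relevant domain: this requires that the Hamiltonian flow of \eqref{eq:hamilton} exist and be smoothly invertible on the support of $\vect Q^n$, which is plausible from standard ODE theory (existence, uniqueness, and smooth dependence on initial conditions) but should at least be flagged. The second, more delicate, point is integrability: the change-of-variables identity for differential entropy holds when the entropies involved are finite and the relevant integrals converge. Here the trivialization $|\det J_{T_z}| = 1$ actually helps, because the correction term vanishes identically rather than merely in expectation, so no integrability concern arises for that term; the only requirement is that $h(\vect Q^n)$ itself be well-defined, which we may take as a standing assumption. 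I would therefore present the proof as a direct corollary of Lemma~\ref{lemm:volume}, noting briefly that volume preservation is exactly the $|\det J| = 1$ condition needed, and that the reversibility of the deterministic flow supplies the invertibility.

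\begin{proof}
By Lemma~\ref{lemm:volume}, the deterministic flow $T_z$ preserves the Lebesgue volume on $\ell^2$, which is equivalent to the statement that its Jacobian determinant satisfies $|\det J_{T_z}| = 1$. The deterministic NLS flow is reversible, with inverse $T_z^{-1} = T_{-z}$, so $T_z$ is an invertible, differentiable transformation with unit-modulus Jacobian. Applying the change-of-variables formula for differential entropy, for any invertible differentiable map $T$ we have
\begin{IEEEeqnarray*}{rCl}
h(T(\vect Q^n)) = h(\vect Q^n) + \E\log\left|\det J_T(\vect Q^n)\right|.
\end{IEEEeqnarray*}
Taking $T = T_z^{-1}$ and using $|\det J_{T_z^{-1}}| = 1$, the correction term vanishes, giving $h(T_z^{-1}(\vect Q^n)) = h(\vect Q^n)$.
\end{proof}
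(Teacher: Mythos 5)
Your proof is correct and follows essentially the same route as the paper: invoke Lemma~\ref{lemm:volume} to conclude $|\det J|=1$, then apply the change-of-variables formula for differential entropy under the invertible map $T_z$, so the correction term $\E\log|\det J|$ vanishes. The extra care you take with invertibility ($T_z^{-1}=T_{-z}$) and integrability is reasonable but does not change the argument.
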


\begin{proof}
From Lemma~\ref{lemm:volume},  $T_z$ is a measure-preserving
transformation; therefore it has unit (determinant) Jacobian, $\det J=1$, where 
$J$ is the $\Reals^{2n\times 2n}$ Jacobian matrix. Since $T_z$ is
also invertible
\begin{IEEEeqnarray*}{rCl}
h(T_z^{-1} (\vect Q^n))=h(\vect Q^n)-\E\log|\det J|=h(\vect Q^n).
\end{IEEEeqnarray*}
Note that with $J$ as a $\Complex^{n\times n}$ matrix, there 
would be a factor $2$ in front of the $\log$.
\end{proof}

In the example of the NLS channel \eqref{eq:nls}, the dispersion and
nonlinear parts are separable and can be solved in simple forms. In
such examples, it might be possible to directly check that the flow of the
equation has unit Jacobian.  Note that the dispersion operator, being a
unitary transformation, has unit Jacobian. One can also verify that the
nonlinear part of the NLS equation \eqref{eq:nls}  has unit Jacobian too.
Consider 
\begin{equation}
Y=X\exp(j f(|X|)),\quad X,Y\in\Complex,
\label{eq:nl-part}  
\end{equation}
for any differentiable function $f(X)$. In \eqref{eq:nls}, $f(X)=zX^2$,
$X=Q(t,0)$ and $Y=Q(t,z)$. Linearizing at $X=0$, $\der Y=\der X$. More
formally, in polar coordinates
\begin{IEEEeqnarray*}{rCl}
  R_Y =R_X,\quad \Phi_Y =\Phi_X +f(R_X),
\end{IEEEeqnarray*}
where $(R_X,\Phi_X)$ and $(R_Y, \Phi_Y)$ are coordinates of $X$ and $Y$,
respectively. Clearly $\det J=1$, which can be seen is the same in the Cartesian coordinates because $|Y|=|X|$.
Since the transformation from the NLS equation \eqref{eq:nls} in the time
domain to the ODE system \eqref{eq:nlsf} in the discrete frequency domain
is also unitary and unit Jacobian, $T_z$ has unit Jacobian.

Finally, it is also possible to check that $T_z$ is entropy-preserving
using the elementary properties of the entropy. It is obvious that the
dispersion operator is entropy-preserving. In the continuous model
\eqref{eq:nls}, the nonlinear transformation in each time sample is given
by \eqref{eq:nl-part}. Using the chain rule for entropy 
\begin{IEEEeqnarray*}{rCl}
h(R_Y,\Phi_Y)&=& h(R_Y)+h\left(\Phi_Y\bigl|R_Y\right)\nn\\
&=&h(R_X)+h\Bigl(\Phi_X+f(R_X)\bigl|R_X\Bigr)\nn\\
&=&h(R_X)+h\left(\Phi_X \bigl|R_X\right)\nn\\
&=&h(R_X,\Phi_X).
\end{IEEEeqnarray*}
Note that the entropy of a complex random variable is defined as the
joint entropy of the real and imaginary parts. Changing variables to the Cartesian coordinate system shifts the entropy 
by  $\E\log |\det J|=\E\log R_Y=\E\log R_X$. Thus $h(R_Y\exp(j\Phi_Y))=h(R_X\exp(j\Phi_X))$.
The result also holds for the vector version of
\eqref{eq:nl-part} as well. 
Because the Fourier transform is also entropy-preserving, so is 
$T_z$.

The last two approaches, however, depend on details of the example at hand.
For some equations the nonlinear part is not an additive term to dispersion, and even 
if it is, it may not be simply solvable like
\eqref{eq:nl-part}. For instance, the nonlinear part of the
Korteweg-de~Vries (KdV) equation is Burgers' equation, which is not easily solvable
as \eqref{eq:nl-part}, so as to examine entropy preservation directly.
However, it is quite easy to show that the KdV equation, and indeed a large
number of evolution equations, are Hamiltonian.

\begin{lemma}[Monotonocity of the Entropy in $S_z$]
The conditional entropy rate in $S_z$ is lower-bounded by the noise entropy
rate, \ie,
\[
\frac{1}{n}h(S_z(\vect Q^n(0))|\vect Q^n(0))\geq C_n+\log\sigma^2(z) .
\]
\end{lemma}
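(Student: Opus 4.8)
The plan is to turn the two structural facts already in hand --- that the deterministic flow $T_z$ preserves differential entropy (Lemma~\ref{lemm:entropy-preservation}) and that Gaussian noise of known total power $\sigma^2(z)$ is injected continuously along the link --- into a quantitative lower bound, using the entropy power inequality (EPI). Throughout I would work with the entropy power of a real $2n$-vector $\vect{U}$, normalized as $\const{N}(\vect{U})\eqdef(2\pi e)^{-1}\exp\bigl(h(\vect{U})/n\bigr)$, together with its conditional version $\const{N}(\vect{U}\,|\,Z)\eqdef(2\pi e)^{-1}\exp\bigl(h(\vect{U}\,|\,Z)/n\bigr)$, so that the EPI reads $\const{N}(\vect{X}+\vect{Y})\geq\const{N}(\vect{X})+\const{N}(\vect{Y})$ for independent $\vect{X},\vect{Y}$, and analogously in conditional form when $\vect{X},\vect{Y}$ are conditionally independent given $Z$.

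First I would discretize the distance $[0,z]$ into $M$ steps of length $\Delta z=z/M$ and approximate $S_z$ by the split-step (Lie--Trotter) recursion $\vect{V}^n_{i+1}=T_{\Delta z}(\vect{V}^n_i)+\vect{G}^n_i$, with $\vect{V}^n_0=\vect{Q}^n(0)$, in which the deterministic flow over one step is followed by the addition of the independent noise increment $\vect{G}^n_i$ accumulated over that step. From the noise statistics in \eqref{eq:nlsf}, each $\vect{G}^n_i$ is a circularly symmetric complex Gaussian vector with per-component variance $f_0\sigma_0^2\Delta z$; as a real $2n$-vector it therefore has entropy power $\const{N}(\vect{G}^n_i)=\tfrac12 f_0\sigma_0^2\Delta z$.

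Second, conditioning everything on the input $\vect{Q}^n(0)$, I would propagate the conditional entropy power through one step. Since $T_{\Delta z}$ is invertible with unit (real $2n\times 2n$) Jacobian (Lemmas~\ref{lemm:volume}--\ref{lemm:entropy-preservation}), it leaves the conditional entropy, hence the conditional entropy power, unchanged: $\const{N}\bigl(T_{\Delta z}(\vect{V}^n_i)\,|\,\vect{Q}^n(0)\bigr)=\const{N}\bigl(\vect{V}^n_i\,|\,\vect{Q}^n(0)\bigr)$. Because $\vect{G}^n_i$ is independent of the accumulated state and of the conditioning, the conditional EPI gives
\[
\const{N}\bigl(\vect{V}^n_{i+1}\,|\,\vect{Q}^n(0)\bigr)\geq\const{N}\bigl(\vect{V}^n_i\,|\,\vect{Q}^n(0)\bigr)+\tfrac12 f_0\sigma_0^2\Delta z.
\]
Telescoping over $i=0,\dots,M-1$, and noting that $\vect{V}^n_0=\vect{Q}^n(0)$ is deterministic given the conditioning (so its conditional entropy power is $0$), yields $\const{N}(\vect{V}^n_M\,|\,\vect{Q}^n(0))\geq\tfrac12 M f_0\sigma_0^2\Delta z=\tfrac12 f_0\sigma_0^2 z=\sigma^2(z)/(2n)$, where the last equality uses $f_0=\bw/n$ and $\sigma^2(z)=\bw\sigma_0^2 z$. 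Crucially, this bound is independent of $M$.

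Finally I would pass to the continuum limit $M\to\infty$, in which the split-step process converges to $S_z(\vect{Q}^n(0))$; since the per-step bound is uniform in $M$, it transfers to the limit, giving $\const{N}\bigl(S_z(\vect{Q}^n(0))\,|\,\vect{Q}^n(0)\bigr)\geq\sigma^2(z)/(2n)$. Unwinding the definition of entropy power, $\tfrac1n h(S_z(\vect{Q}^n(0))\,|\,\vect{Q}^n(0))=\log\bigl(2\pi e\,\const{N}(\cdot)\bigr)\geq\log\bigl(\pi e\,\sigma^2(z)/n\bigr)=C_n+\log\sigma^2(z)$, which is the claim. The two places demanding care are the justification that the conditional EPI applies at each step --- that is, that the injected increments are genuinely independent of the accumulated state --- and the passage to the continuum limit, where one must argue that the differential entropy of the split-step approximation converges to (or at least lower-bounds, in the limit) that of the true flow $S_z$. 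This limiting step is the main obstacle; the EPI iteration itself is routine once the discretization is in place.
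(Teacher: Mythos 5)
Your proposal is correct and follows essentially the same route as the paper: one step of the deterministic flow (entropy-preserving by Lemma~\ref{lemm:entropy-preservation}) followed by an independent Gaussian increment, the conditional EPI applied per step, and a telescoping sum whose per-step gain $\tfrac12 f_0\sigma_0^2\Delta z$ accumulates to $\sigma^2(z)/(2n)$; the paper writes this as the recursion \eqref{eq:Sz+dz} and ``integrates in $z$.'' The only difference is presentational --- you make the Lie--Trotter discretization and the $M\to\infty$ limit explicit and flag the convergence of differential entropy as the delicate point, which the paper passes over silently.
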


\begin{proof}
In a small interval $\Delta z$ in \eqref{eq:nlsf}
\begin{IEEEeqnarray}{rCl}
S_{z+\Delta z}(\vect Q^n(z))=T_{\Delta z}(S_z(\vect Q^n(z)))+\vect W^n(z)\sqrt{\Delta z}.
\label{eq:Sz+dz}
\end{IEEEeqnarray}
The two terms in the right hand side of \eqref{eq:Sz+dz} are independent. Applying the EPI \eqref{eq:epi},
\begin{IEEEeqnarray*}{rCl}
2^{\frac{1}{n}h(S_{z+\Delta z}(\vect Q^n))} &\geq&
2^{\frac{1}{n}h(T_{\Delta z}(S_z(\vect
  Q^n)))}+2^{\frac{1}{n}h(\vect W^n(z)\sqrt{\Delta z})}\\
&=& 
2^{\frac{1}{n}h(S_z(\vect
  Q^n))}+2^{C_n}\sigma^2(\Delta z),
\end{IEEEeqnarray*}
where the last step follows because, from
Lemma~\ref{lemm:entropy-preservation}, $T_z$ is entropy-preserving and
$\vect W^n$ is Gaussian.  Given $\vect Q^n(0)=\vect q^n(0)$, we integrate
in $z$ to obtain
\begin{IEEEeqnarray*}{rCl}
2^{\frac{1}{n}h(S_{z}(\vect q^n(0)))} 
&\geq& 
2^{\frac{1}{n}h(\vect
  q^n(0))}+2^{C_n}\sigma^2(z)\\
&=&
2^{C_n} \sigma^2(z).
\end{IEEEeqnarray*}
It follows that
\begin{IEEEeqnarray}{rCl}
  \frac{1}{n}h(\vect Q^n(z)|\vect Q^n(0))\geq C_n+\log(\sigma^2(z)).
\qedhere
\label{eq:cond-entropy}
\end{IEEEeqnarray}
\end{proof}

Combining \eqref{eq:output-entropy} and \eqref{eq:cond-entropy}, we bound
the mutual information
\[
  \frac{1}{n}I(\vect Q^n(0); \vect Q^n(z))\leq
  \log\left(1+\frac{\const P_0}{\sigma^2(z)}\right).
\]
Noting that the right hand side is independent of the input distribution,
we obtain the upper bound \eqref{eq:upper-bound}.

\begin{remark}[Spectral Efficiency (SE) in the Case $n\neq m$]
In this paper, we did not introduce filters into the model.
Any potential filtering at the receiver (possibly due to spectral
broadening) can only decrease the mutual information (by the chain rule). Furthermore, let $B(z)$ be 
the bandwidth at distance $z$, according to a certain definition. Since $B(0)\leq \max_z B(z)$, 
normalizing by $\max_z B(z)$ would only decrease the SE relative to the 
linear dispersive channel (where $B(z)=B(0)$). 
In summary, nonlinearity is entropy-preserving and the effect of
its spectral broadening does not increase data rate or the SE. The upper bound 
\eqref{eq:upper-bound} on the SE holds if $n\neq m$. 

Throughout the paper, we assumed that noise bandwidth is larger
than the signal bandwidth. Otherwise, capacity can be (nearly) unbounded by exploiting
the (nearly) noise-free frequency band. 
\label{rem:spectral-broadening}
\qed
\end{remark}

The upper bound \eqref{eq:upper-bound} is indeed simple. In this paper, we discussed 
it in the context of a general Hamiltonian channel with continuous evolution. In particular, 
it also holds for a discrete concatenation of energy- and entropy-preserving systems with additive white Gaussian 
noise.

A different account of the upper bound \eqref{eq:upper-bound} is given in \cite{kramer2015ubc} using the
split-step Fourier method. 

\section{Conclusion}

It is shown that the capacity of the point-to-point optical fiber channel,
modeled via the stochastic nonlinear Schr\"odinger equation
(\ref{eq:nls}), and subject to a power and bandwidth constraint, is
upper-bounded by $\log(1+\snr)$.

\section*{Acknowledgment}
All authors acknowledge the support of the Institute for Advanced Study at 
the Technical University of Munich, funded by the German Excellence
Initiative. M. I. Yousefi and G. Kramer were also supported by an Alexander von Humboldt 
Professorship, endowed by the German
Federal Ministry of Education and Research.

\appendices
\section{The Entropy Power Inequality}
\label{app:epi}

\begin{lemma}[Entropy Power Inequality]
Let $X, Y\in\Reals^n$ be independent random variables. Define the entropy
power of a random variable $X\in\Reals^n$  as
\begin{IEEEeqnarray}{rCl}
\sigma_e^2(X)=\frac{1}{2\pi e}2^{\frac{2}{n}h(X)}.
\label{eq:volume-epi}  
\end{IEEEeqnarray}
Then 
\begin{IEEEeqnarray}{rCl}
\sigma_e^2(X+Y)\geq \sigma_e^2(X)+\sigma_e^2(Y).  
\label{eq:epi}
\end{IEEEeqnarray}
Equality holds if and only if $X$ and $Y$ are Gaussian with
proportional covariance matrices.
\end{lemma}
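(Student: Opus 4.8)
The plan is to reduce \eqref{eq:epi} to two classical facts about how differential entropy and Fisher information behave under convolution with a Gaussian, and then to interpolate between the given pair $(X,Y)$ and pure Gaussians along the heat flow. Throughout, write $J(U)=\E\,\|\nabla\log f_U(U)\|^2$ for the Fisher information of a random vector $U\in\Reals^n$ with smooth density $f_U$, and let $\rho_U=\nabla\log f_U$ denote its score. I would work with natural logarithms and rescale to base two only at the end, which merely changes the constants in \eqref{eq:volume-epi}.

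First I would establish de~Bruijn's identity: if $Z$ is standard Gaussian in $\Reals^n$, independent of $U$, and $U_t=U+\sqrt{t}\,Z$, then the density of $U_t$ obeys the heat equation $\partial_t f=\tfrac12\Delta f$; differentiating $h(U_t)=-\int f\log f\,\der x$ under the integral sign and integrating by parts yields
\[
\frac{\der}{\der t}\,h(U_t)=\tfrac12\,J(U_t).
\]
This is the bridge that turns a statement about entropy (hence about the entropy power $\sigma_e^2$) into a statement about Fisher information along the flow. A useful by-product is that, since $J>0$, the flow regularizes: $U_t$ has a smooth density with finite Fisher information for every $t>0$, which lets me reduce the general case to the smooth case by letting $t\downarrow0$ at the end.

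Second I would prove the Fisher information inequality: for independent $X,Y$,
\[
\frac{1}{J(X+Y)}\ge\frac{1}{J(X)}+\frac{1}{J(Y)}.
\]
The key observation is that the score of the sum $S=X+Y$ is a conditional expectation of the individual scores, $\rho_S(S)=\E[\rho_X(X)\mid S]=\E[\rho_Y(Y)\mid S]$, so that $\rho_S(S)=\E[\lambda\rho_X(X)+(1-\lambda)\rho_Y(Y)\mid S]$ for every $\lambda\in[0,1]$. Because conditional expectation contracts the $L^2$-norm and the cross term vanishes by independence (each score is mean-zero, as $\E\,\rho_X(X)=\int\nabla f_X\,\der x=0$), one gets $J(S)\le\lambda^2 J(X)+(1-\lambda)^2 J(Y)$; minimizing over $\lambda$ gives the displayed bound.

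Finally I would combine the two. Let $X^{(t)}$ and $Y^{(t)}$ denote $X$ and $Y$ convolved with independent Gaussians of variances $u(t)$ and $v(t)$, and track the entropy-power gap
\[
G(t)=\sigma_e^2\bigl(X^{(t)}+Y^{(t)}\bigr)-\sigma_e^2\bigl(X^{(t)}\bigr)-\sigma_e^2\bigl(Y^{(t)}\bigr),
\]
noting that $X^{(t)}+Y^{(t)}$ is $X+Y$ convolved with a Gaussian of variance $u(t)+v(t)$. As $t\to\infty$ every variable is dominated by its Gaussian component, where \eqref{eq:epi} holds with equality, so $G(t)\to0$; differentiating $G$ via de~Bruijn's identity and controlling the derivative with the Fisher information inequality should force $G$ to be monotone, so that $G(0)=\sigma_e^2(X+Y)-\sigma_e^2(X)-\sigma_e^2(Y)$ has the correct sign. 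I expect this step to be the main obstacle: the two inputs must be noised at coordinated, generally unequal rates (a single common clock does not make the derivative sign-definite), and choosing the coupling so that the Fisher information inequality applies cleanly — together with justifying the differentiation and the boundary behaviour as $t\to\infty$ — is the delicate part. For the equality conditions, I would trace equality back through the argument: it forces equality in the $L^2$-contraction step for all $t$, that is, $\lambda\rho_X(X)+(1-\lambda)\rho_Y(Y)$ must be a function of $X+Y$ alone; for independent $X,Y$ a Cauchy-type functional-equation argument then shows each score is affine, so $X$ and $Y$ are Gaussian, and matching the optimal weight $\lambda$ pins their covariances to be proportional. As an alternative I would note Lieb's route, which derives \eqref{eq:epi} from the sharp constant in Young's convolution inequality, where the analytic difficulty is concentrated entirely in that sharp constant rather than in the Fisher information inequality.
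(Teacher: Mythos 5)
Your proposal is correct in outline, but it takes a genuinely different route from the paper. The paper follows the Lieb-style argument (the one you mention only as an alternative in your last sentence): it invokes the sharp form of Young's convolution inequality $\norm{f_X\convolution f_Y}_a\leq C\norm{f_X}_p\norm{f_Y}_q$, relates entropy to the norms via $h(X)=\lim_{a\downarrow 1}\frac{1}{1-a}\log\norm{f_X}_a^a$, and obtains \eqref{eq:epi} by taking $a\downarrow 1$ and optimizing the free parameter on the right-hand side --- so all the analytic difficulty is concentrated in the sharp constant of Young's inequality, and the equality conditions fall out of the equality case there. You instead follow the Stam--Blachman route: de~Bruijn's identity plus the Fisher information inequality $1/J(X+Y)\geq 1/J(X)+1/J(Y)$ (proved via the conditional-expectation representation of the score of a sum), followed by a heat-flow interpolation. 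Both are classical and both are, at the level of detail given, sketches: the paper does not actually verify the limit interchange or the optimization in Young's inequality, and you explicitly leave open the coordination of the two noising rates $u(t),v(t)$, which is indeed the delicate part of Blachman's argument (the rates must be chosen so that the Fisher information inequality makes the derivative of the gap sign-definite; a common clock does not work). Your route has the incidental merit of being thematically aligned with the rest of the paper --- it tracks how entropy grows under continuous injection of Gaussian noise, which is exactly the mechanism exploited in the lemma on monotonicity of the entropy in $S_z$ --- while the paper's route connects more directly to the Brunn--Minkowski discussion that follows in the appendix. Neither sketch is more complete than the other; yours is acceptable as a substitute, provided the interpolation step is eventually carried out (e.g., via Blachman's choice of rates or the MMSE-based variants).
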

\begin{proof}
By now there are many proofs of the EPI. A simple proof is given in \cite[Section 17.8]{cover2006}. It can be explained as
follows. 

Consider $n=1$. We are looking for an inequality involving the convolution
$f_{X}(x)\convolution f_Y(y)$. The well-known Young's inequality for
$f_X(x)\in L^p(\Reals)$ and $f_Y(y)\in L^q(\Reals)$ states 
\begin{IEEEeqnarray}{rCl}
 \norm{f_{X}(x)\convolution f_Y(y)}_a\leq C
\norm{f_{X}(x)}_p\norm{f_{Y}(y)}_q,
\label{eq:young} 
\end{IEEEeqnarray}
where $1/p+1/q=1/a+1$ ($p,q,a\geq 1$), and
$C=\sqrt{C_pC_q/C_a}$, $C_x=x^{\frac{1}{x}}/{x'}^{\frac{1}{x'}}$, where
$x'$ is conjugate to $x$, \ie, $1/x+1/x'=1$. When $p,q\neq 1$, the equality
holds if and only if $f_X(x)$ and $f_Y(y)$ are Gaussian. On the other hand, entropy and norm of a probability
density $f_X(x)$ are related via
$h(X)=-\partial_a\log\norm{f_X(x)}_a^a$ at $a=1$. However differentiating
both sides of an inequality does not preserve the sense
of the inequality. Nevertheless, using
L'H\^opital's rule we can convert differentiation to a limit
\[
  h(X)=\lim\limits_{a\downarrow 1}\frac{1}{1-a}\log\norm{f_X(x)}_a^a.
\]
This in turn gives $\sigma_e(X)=\lim_{a\downarrow
  1}\norm{f_X(x)}_{a}^{2a/(1-a)}$. At $a=1+\epsilon$ ($\epsilon\rightarrow 0$), the
left side of \eqref{eq:young} gives $\sigma_e(X+Y)$. For a given $a$, there is one free parameter in the right hand side of
\eqref{eq:young}. By choosing the free parameter such that the right side of
\eqref{eq:young} is maximized, we obtain the EPI. 
The case $n>1$ is obtained by replacing
entropy with entropy rate (and using a version of \eqref{eq:young} in
$\Reals^n$ to find conditions of equality). The equality in \eqref{eq:epi} results from
the equality in \eqref{eq:young}.

The EPI, in some sense, is the derivative of the Young's inequality. 
\end{proof}

Several remarks are in order now.

\paragraph{Bound on conditional discrete entropy}{%
Let $A$ and $B$ be finite discrete sets (alphabets). Since not all elements
of $A+B$ are distinct, we have the sumset inequality
\begin{equation}
\mu(A+B)\leq \mu(A)\mu(B),  
\label{eq:sumset}
\end{equation}
where $\mu$ denotes set cardinality. This in turn gives
\begin{equation}
H(X+Y)\leq H(X)+H(Y), 
\label{eq:H(X+Y)}   
\end{equation}
where $X$ and $Y$ are independent discrete random variables taking values,
respectively, in alphabets $A$ and $B$, and $H$ is discrete entropy. For
uniform random variables \eqref{eq:H(X+Y)} is just the sumset inequality
\eqref{eq:sumset}; non-uniform distributions can (almost) be
converted to uniform distributions via the asymptotic equipartition theorem
\cite{cover2006}. The inequality \eqref{eq:H(X+Y)} reflects the fact that
the sum of independent discrete random variables typically does not tend to
a uniform random variable (maximum entropy). In fact, in a sense, $X+Y$ is
``less uniform'' than $X$ and $Y$. In sharp contrast, the (normalized) sum
of independent continuous random variables tends to a Gaussian random
variable (maximum entropy)---however, the increase in randomness is
measured in entropy power, not the entropy itself.

The inequality \eqref{eq:H(X+Y)} seems to indicate that as noise is added
along the optical fiber, the conditional entropy of the signal does not
increase. Two distinct pairs $(\vect q^n_1,\vect w^n_1)$ and $(\vect
q^n_2,\vect w^n_2)$ can have the same sum $\vect q^n_1+\vect w^n_1=\vect
q^n_2+\vect w^n_2$, making $\vect Q^n+\vect W^n$ potentially ``less
random'', so to speak. This is, however, true only in a discrete-state
model in which $\vect q^n$ is quantized in a finite set. It follows that,
the entropy bounds in this paper may not be valid in discrete-state models,
due to important differences between the differential and discrete
entropies. This difference stems from the properties of the cardinality
(volume) in discrete (continuous) sets. 
}

\paragraph{Growth of the effective variance in evolution}{%
For a Gaussian random variable with variance $\sigma^2$,
$\sigma_e^2(X)=\sigma^2$. Thus one may think of $\sigma_e^2(X)$ as the
effective variance of $X$ or the squared radius of the support of $X$ (hence the notation).

A family of fascinating metric inequalities analogous to \eqref{eq:epi}
exist in geometry and analysis, where the squared radius
\eqref{eq:volume-epi} is defined differently \cite{gardner2002brunn}.
Notably, in one of its facets, the Brunn-Minkowski inequality (BMI) for
compact regions $A,B\subset\Reals^n$ states 
\begin{IEEEeqnarray}{rCl}
  \mu^{\frac{1}{n}}(A+B)\geq \mu^{\frac{1}{n}}(A)+\mu^{\frac{1}{n}}(B),
\label{eq:bmi}
\end{IEEEeqnarray}
where $\mu$ is the Lebesgue measure (volume) and $A+B$ is the
Minkowski sum of $A$ and $B$. The BMI looks like the EPI
with $\sigma_e^2(X)\eqdef\mu^{\frac{1}{n}}(A)$. 
Let $\mathcal{A}_\epsilon^m$,  $\mathcal{B}_\epsilon^m$ and
$\mathcal{C}_\epsilon^m$ be, respectively, the $\epsilon$-typical sets of
random variables $X,Y\in\Reals^n$ and $Z=X+Y$.  From the concentration of
measure $\mu(\mathcal A^m_{\epsilon})\rightarrow 2^{mh(X)}$, or
$\mu^{\frac{1}{nm}}(\mathcal A_{\epsilon}^m)\rightarrow
2^{\frac{1}{n}h(X)}$, as $\epsilon\rightarrow 0$.  Applying the BMI to
$\mathcal{A}_\epsilon^m$ and $\mathcal{B}_\epsilon^m$, we obtain the EPI
with factor one in the exponent in \eqref{eq:volume-epi} instead of two.
The result is not the desired EPI inequality. This is because
$\mathcal{C}_\epsilon^m\neq \mathcal{A}_\epsilon^m+\mathcal{B}_\epsilon^m$.
In fact, again from the concentration of measure, $Z^m$ concentrates on a
smaller set $\mathcal{C}_\epsilon^m\subset\mathcal{A}_\epsilon^m +
\mathcal{B}_\epsilon^m$, \ie, $\mu(\mathcal{C}_\epsilon^m)<\mu(\mathcal{A}_\epsilon^m)\mu(\mathcal{B}_\epsilon^m)$.
To obtain the EPI from the BMI, and thus to give the EPI a geometric
meaning, we need a probabilistic version of the Minkowski sum, where the
volume is defined as the size of high probability sequences.  Define the
$\Omega$-restricted Minkowski sum of two sets $A,B\subset\Reals^n$
\begin{IEEEeqnarray}{rCl}
A +_{\Omega} B\eqdef\Big\{ a+b \:|\: (a,b)\in \Omega\subset A\times B\Bigr\}.
\label{eq:rest-sum}  
\end{IEEEeqnarray}
The restricted BMI states that, if $\mu(\Omega)\geq (1-\delta)\mu(A)\mu(B)$
for some $\delta>0$, then \eqref{eq:bmi} holds but with exponent $2/n$
\cite[Theorem 1.2, with large $n$]{szarek1996vrms}. Furthermore, the restricted BMI is sharp, regardless
of how close $\Omega$ is to $A\times B$, \ie, as $\delta\rightarrow 0$. That is to
say, even a small uncertainty in the size of $A\times B$ would increase the
exponent in the BMI by a factor of two. The inequality is best seen for Gaussian random variables where typical
sets can be imagined as spherical shells \cite{szarek1996vrms}. Applying the restricted BMI to
$\mathcal{A}_\epsilon^m$ and $\mathcal{B}_\epsilon^m$ with 
$\Omega=\{(a,b)\:|\:a\in \mathcal{A}_{\epsilon}^m, b\in \mathcal{B}_{\epsilon}^m,  a+b\in \mathcal{C}_{\epsilon}^m\}$, we successfully
obtain the EPI.

With the geometric interpretation of the BMI for typical sequences, the upper bound \eqref{eq:upper-bound} is
trivial. The output typical set $\mathcal A_{\epsilon}^m(\vect Q^n(z))$ is covered in the sphere 
$S_{2nm}(\vect q_{c_1}^n(z), \sqrt{m(\const{P}_0+\sigma^2(z))}$, centered at some $\vect q^n_{c_1}(z)$. For a particular input
sequence $\vect q^n(0)$, as the
typical set of the signal and noise are overlapped in the optical link, the
resulting region can be packed by a sphere $S_{2nm}(\vect q^n_{c_2}(z),
\sqrt{m\sigma^2(z)})$, centered at some $\vect q^n_{c_2}(z)$. The
capacity sphere-packing interpretation gives \eqref{eq:upper-bound}.

Inequalities in the family to which \eqref{eq:epi} and \eqref{eq:bmi} belong
appear intimately connected; however, it seems difficult to deduce them all
from one master inequality, due to important
differences among them. There is substantial work on this type of
inequality; see \cite{gardner2002brunn} and references in \cite{cover2006}.
}

\bibliographystyle{IEEEtran}


\end{document}